\newcommand\incircbin
\newcommand\@incircbin[2]
\def\squareforqed{\leavevmode\hbox to.77778em{\hfil\vrule\vbox to.675em{\hrule width.6em\vfil\hrule}\vrule\hfil}} 
\newtheorem{theorem}{Theorem}
\newtheorem{definition}[theorem]{Definition}
\newtheorem{lemma}[theorem]{Lemma}
\newtheorem{observation}[theorem]{Observation}
\theoremstyle{remark}
\newcommand{\nosemic}{\renewcommand{\@endalgocfline}{\relax}}
\newcommand{\dosemic}{\renewcommand{\@endalgocfline}{\algocf@endline}}
\date{}
\title{Sampling an Edge Uniformly in Sublinear Time}
\author{Jakub Tětek\\\texttt{j.tetek@gmail.com}\\Basic Algorithms Research Copenhagen\\University of Copenhagen}
\begin{document}
\maketitle
\begin{abstract}
The area of sublinear algorithms have recently received a lot of attention. In this setting, one has to choose specific access model for the input, as the algorithm does not have time to pre-process or even to see the whole input. A fundamental question remained open on the relationship between the two common models for graphs -- with and without access to the ``random edge" query -- namely whether it is possible to sample an edge uniformly at random in the model without access to the random edge queries.

In this paper, we answer this question positively. Specifically, we give an algorithm solving this problem that runs in expected time $O(\frac{n}{\sqrt{m}} \log n)$. This is only a logarithmic factor slower than the lower bound given in \cite{original_paper}. Our algorithm uses the algorithm from \cite{bounded_arboricity} which we analyze in a more careful way, leading to better bounds in general graphs. We also show a way to sample edges $\epsilon$-close to uniform in expected time $O(\frac{n}{\sqrt{m}} \log \frac{1}{\epsilon})$, improving upon the best previously known algorithm.

We also note that sampling edges from a distribution sufficiently close to uniform is sufficient to be able to simulate sublinear algorithms that use the random edge queries while decreasing the success probability of the algorithm only by $o(1)$. This allows for a much simpler algorithm that can be used to emulate random edge queries.
\end{abstract}

\section{Introduction}
The rise in huge datasets has sparked interest in the design of sublinear algorithms, where the goal is to approximately compute some function of the data while using sublinear amount of resources. The algorithms with sublinear space complexity -- streaming algorithms -- are well-studied \cite{McGregor2014, Muthukrishnan2005} and algorithms with sublinear time complexity have also recently received signigicant amount of interest \cite{Ron2019}. An interesting problem in such algorithms is that they cannot even look at the whole dataset but instead look at some random sample. This means that it has to be decided how exactly is one allowed to sample and query the input.

Different models for sublinear computation on graphs have been considered and give rise to algorithms with different running times. For example, in \cite{Gonen2011}, the authors show a lower bound on the problem of counting stars in a graph, while in \cite{Aliakbarpour2016}, the authors show an algorithm which beats this lower bound by using a stronger model. We consider the \textit{standard sublinear graph model}, defined as being able to perform uniform random vertex queries, degree queries, neighborhood queries. We also consider the \textit{extended sublinear graph model} which in addition to the mentioned queries also has the added operation of picking a random edge. For further discussion and models that allow pair queries (which we also consider in this work), see \Cref{sec:preliminaries}.

It is a fundamental question how these two models relate to each other. First progress in this direction has been made by \citet{original_paper}. In their paper, the authors show an algorithm that samples an edge approximately uniformly and can be used to (approximately) simulate the ``random edge" operation in time $O(\frac{n}{\sqrt{\epsilon m}})$. They also show a simple lower bound of $\Omega(\frac{n}{\sqrt{m}})$. While their algorithm is close to optimal for approximate sampling, they do not investigate in general the effects of \textit{inexact} sampling on algorithms wishing to use it to emulate the extended sublinear graph model. In this paper, the authors also pose a question whether it is possible to sample an edge exactly uniformly in sublinear time in the standard graph model. We answer this question positively by giving such algorithm that runs in expected $O(\frac{n}{\sqrt{m}} \log \frac{1}{n})$ time, thus resolving (up to a logarithmic factor) the question of relation of the two models, at least when only a constant number of random edge queries are desired. Moreover, we show that for small enough $\epsilon$, the difference between the uniform and $\epsilon$-close to uniform (defined in \Cref{sec:preliminaries}) distibutions is immaterial, allowing us to give a more practical algorithm for simulation of random edge queries with better concentration of running time.

\section{Preliminaries} \label{sec:preliminaries}
\begin{definition}
A probability distribution $\mathcal{P}$ is said to be pointwise $\epsilon$-close to $\mathcal{Q}$, denoted $|\mathcal{P} - \mathcal{Q}|_P \leq \epsilon$, if
\[
|\mathcal{P}(x)-\mathcal{Q}(x)| \leq \epsilon \mathcal{Q}(x), \quad \text { or equivalently } \quad 1-\epsilon \leq \frac{\mathcal{P}(x)}{\mathcal{Q}(x)} \leq 1+\epsilon
\]
\end{definition}
\noindent
In this paper, we consider distributions pointwise $\epsilon$-close to uniform. This measure of similarity of distributions is related to the total variational distance, the definition of which we recall below. Specifically, for any $\mathcal{P}, \mathcal{Q}$, it holds that $|\mathcal{P}- \mathcal{Q}|_{TV} \leq |\mathcal{P}- \mathcal{Q}|_{P}$ \cite{original_paper}.
\begin{definition}
The total variational distance between $\mathcal{P}$ and $\mathcal{Q}$ is defined as
\[
|\mathcal{P}- \mathcal{Q}|_{TV} = \frac{1}{2}\sum_{\omega \in \Omega} |\mathcal{P}(\omega)- \mathcal{Q}(\omega)|
\]
\end{definition}
\noindent
We say a probability distribution $\mathcal{P}$ is $\epsilon$-close to $\mathcal{Q}$ (not pointwise) if $|\mathcal{P} - \mathcal{Q}|_{TV} \leq \epsilon$.

In this paper, we replace each edge by two directed edges going in the oposite directions. Edges of the original graph can then be sampled by directing them in an arbitrary way and then forgetting the direction of a sampled directed edge.

We use the following notion of heavy and light edges, as defined in \cite{original_paper}. Let $\theta = \lceil\sqrt{2m}\rceil$. We call a vertex $v$ light if $d(v) \leq \theta$ and heavy otherwise. An edge $uv$ is called light (heavy) when the vertex $u$ is light (heavy). We denote $d_H(v)$ the number of heavy neighbors of $v$.

In this paper, we assume the knowledge of the number of edges. If we do not assume this, we might set $m$ to any upper bound. Indeed, setting $m$ to any number greater than the number of edges in the graph would work, but higher values of $m$ would result in worse running times. If one does not know the number of edges exactly, this can be resolved by using the algorithm from \cite{Goldreich2008}, to get a constant factor approximation with probabiity at least $1-\delta$ in time $O(\frac{n}{\sqrt{m}} \log \frac{1}{\delta})$.

\section{Our results}
In \Cref{sec:approx_sampling}, we show that the algorithm from \cite{bounded_arboricity}, for modified values of its parameters, performs considerably better than was previously known. Namely, that it can sample edges pointwise $\epsilon$-close to uniform in time $O(\frac{n}{\sqrt{m}} \log \frac{1}{\epsilon})$. The best previously known algorithm runs in time $O(\frac{n}{\sqrt{\epsilon m}})$\cite{original_paper}. This new analysis is the main contribution of this paper.

It is this speedup which allows us to use this result for efficiently sampling in expected time $O(\frac{n}{\sqrt{m}} \log \frac{1}{n})$ from the \textit{exact uniform distribution} (\Cref{sec:exact_sampling}) as well as to argue that for small enough $\epsilon$, the difference between sampling from the uniform distribution and sampling pointwise $\epsilon$-close to uniform is immaterial (\Cref{sec:doesnt_matter}). Polynomially small $\epsilon$ is sufficient, leading to same expected time complexity as for exact sampling but with better concentrtion of the running time, making this approach especially suitable for use in practice.

\subsection{Techniques}
We now give brief overview of the approximate sampling, as this is the algorithm on which the results from the rest of the paper build. The algorithm has previously been described in \cite{bounded_arboricity} in the context of bounded arboricity graphs. In this paper, we use this algorithm but set its parameters differently.

The algorithm is based on constrained random walks of length chosen at random; the algorithm returns the last edge of the walk. The random walk has constraints, which, when not satisfied, cause the algorithm to fail and restart. These constraints are (1) the first vertex $v$ of the walk is light and all subsequent vertices are heavy and (2) picking $X \sim Bern(d(v)/\sqrt{2m})$, the first step of the walk fails if $X = 0$ (note that this is equivalent to using rejection sampling to sample the first edge od the walk).

The intuition the following. It is easy to sample light edges -- one may pick vertex at random, choose $j$ uniformly at random from $[\theta]$ and return the $j$th outgoing edge incident to the picked vertex or repeat if there is less than $j$ neighbors. This corresponds to a random walk of length 1.

When we perform a random walk of length 2 subject to the constraints, the probability that a specified heavy vertex is the middle vertex of the walk is proportional to the number of its light neighbors. Since $\theta$ is set such that at least a half of heave vertex's neighbors are light, this is at least half of its degree and we are, therefore, able to sample edges pointwise $2$-approximately uniformly.

It can be shown that by further increasing the maximum allowed length (recall that the length is chosen at random from $[k]$ for some $k$) of the random walk, the accuracy of sampling increases exponentially.

In \Cref{sec:exact_sampling}, we then show that if the accuracy of the approximate sampling is sufficient, one may correct the difference to the uniform distribution while contributing little to the expected running time. In \Cref{sec:doesnt_matter}, we show a simple coupling argument, proving that if an algorithm succeeds with probability $c$ in the extended sublinear graph model, it succeeds with probability $c - o(1)$ when uniform random edge query is emulated by \texttt{sample\_edge} with $\epsilon$ being polynomially small in the standard sublinear graph model.

\section{Related results}
Random edge queries have been used by \citet{Gonen2011} for counting stars in a time beating the lower bound shown by \citet{Aliakbarpour2016}. This work has been generalized to counting arbitrary subgraphs in \cite{Assadi2018}.

The problem we address in this paper was first considered in \cite{original_paper}. As mentioned above, the authors show an algortihm for pointwise $\epsilon$-approximate sampling that runs in $\frac{n}{\sqrt{\epsilon m}}$ and show a lower bound of $\frac{n}{\sqrt{m}}$. They also pose the open problem of exact sampling, which is solved in this paper. This problem in the context of bounded arboricity graphs have been considered in \cite{bounded_arboricity}. From this paper also comes, up to a difference in parameters, what we call \Cref{alg:sample_edge}. The problem of sampling multiple edges has been considered in \cite{Eden2020}.

We mention two more papers that consider edge sampling. In their paper, \citet{Eden2019} show an algorithm for edge sampling parameterized by the arboricity (or, oeuivalently, the degeneracy) of the input graph. The time complexity of this algorithm is in the worst case, up to $\text{poly}(\log n, \epsilon^{-1})$ factor the same as that in this paper and \cite{original_paper} but can be significantly better when the arboricity of the graph is small. In \cite{Bishnu2019}, the authors solve the problem of weighted sampling of edges using a stronger inner product oraculum. Using this algorithm for the unweighted case results in a time complexity same as that from \cite{original_paper}.

Sampling vertices in a more restrictive local query model was studied in \cite{Chierichetti2018}. In this model, we only have access to some initial seed vertex and to neighbor queries. The algorithm given in this paper is based on random walks and its running time is parameterized by the mixing time of the graph.
\section{Approximate sampling} \label{sec:approx_sampling}
In this section, we show the \Cref{alg:sample_edge} which samples an edge pointwise $\epsilon$-approximately in expeced time $O(\frac{n}{\sqrt{m}}\log \frac{1}{\epsilon})$. This algorithm is, up to the change of parameters, the one used in \cite{bounded_arboricity}. We provide a tighter analysis in this paper. This algorithm works by repeated sampling attempts, each succeeding with probability $\frac{\sqrt{m}}{n\log \epsilon^{-1}}$. We then show that upon successfuly sampling an edge, the distribution is pointwise $\epsilon$-close to uniform.

\begin{figure}
\centering
\begin{minipage}[t]{0.495\textwidth}
\vspace{0pt}
\begin{algorithm}[H]
Sample a vertex $u_0 \in V$ uniformly at random and query for its degree.\\
If $d(u_0)>\theta$ return fail.\\
Choose a number $j \in[\theta]$ uniformly at random.\\
Let $u_1$ be the $j^{\text {th }}$ neighbor of $u_0$\\
\For{$i = 2, \cdots, k$}{
If no vertex was returned or if the returned vertex is light then return fail\\
Set $u_i$ to  a random neighbor of $u_{i-1}$.\\
}
Return $(u_{k-1}, u_k)$.

\caption{\texttt{Sampling\_attempt($k$)}} \label{alg:sampling_attempt}
\end{algorithm}
\end{minipage}
\begin{minipage}[t]{0.495\textwidth}
\vspace{0pt}
\begin{algorithm}[H]
Pick $k$ from $\{1, \cdots, \ell = \lceil \log_2 \frac{1}{\epsilon} \rceil + 1\}$ uniformly at random\\
Call \texttt{Sampling\_attempt(k)}, if it fails, go back to line 1, otherwise return the result

\caption{\texttt{Sample\_edge($\epsilon$)}} \label{alg:sample_edge}
\end{algorithm}
\end{minipage}
\end{figure}

We show separately for light and heavy edges that they are sampled almost uniformly. The case of light edges (\Cref{obs:light_uniform}) is analogous to a proof in \cite{original_paper}. We include it here for completeness.

\begin{observation} \label{obs:light_uniform}
Any fixed light edge $e=(u,v)$ is chosen by \Cref{alg:sampling_attempt} with probability $\frac{1}{\ell n \theta}$
\end{observation}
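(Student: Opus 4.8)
The plan is to trace through \Cref{alg:sampling_attempt} and compute the probability that a fixed light directed edge $e=(u,v)$ is returned. Since $u$ is light, the only way $e$ can be output is via the length-one walk, i.e. when the random choice $k$ in \Cref{alg:sample_edge}'s caller equals $1$; for $k\geq 2$ the final edge $(u_{k-1},u_k)$ has $u_{k-1}$ heavy (the loop would have returned fail otherwise), so it cannot equal $e$. Wait — I need to be careful: the statement is about \Cref{alg:sampling_attempt} itself, which takes $k$ as a parameter. Re-reading, the factor $\frac{1}{\ell}$ in the claimed probability $\frac{1}{\ell n \theta}$ must come from averaging over the choice of $k\in[\ell]$, so the cleanest reading is that we compute, for the overall sampling attempt with $k$ drawn uniformly from $[\ell]$, the probability that $e$ is returned. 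So first I would note $\Pr[k=1]=\frac1\ell$, and condition on this event.

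Conditioned on $k=1$: the algorithm picks $u_0\in V$ uniformly (probability $1/n$ that $u_0=u$), checks $d(u_0)\le\theta$ which holds since $u$ is light, picks $j\in[\theta]$ uniformly, and sets $u_1$ to the $j$-th neighbor of $u_0$. Since $u$ has $d(u)\le\theta$ neighbors, exactly one value of $j$ yields $u_1=v$ (the index of $v$ in $u$'s adjacency list), so this happens with probability $1/\theta$. The loop body does not execute when $k=1$, and the algorithm returns $(u_0,u_1)=(u,v)=e$. Multiplying, the probability is $\frac1\ell\cdot\frac1n\cdot\frac1\theta = \frac{1}{\ell n\theta}$. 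I should also check that for $j$ larger than $d(u)$ the algorithm correctly fails or the event simply doesn't contribute — either way it does not affect the count of the single good $j$.

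The only mild subtlety, and the place I'd be most careful, is the bookkeeping around the parameter $k$ versus the outer loop: making sure that the $\frac1\ell$ factor is legitimately attributed here and that there is no other route (some $k\ge2$ walk ending at a light edge) — but constraint (1) of the walk, enforced by the line ``if the returned vertex is light then return fail'', rules that out, since for $k\ge 2$ the edge returned is $(u_{k-1},u_k)$ with $u_{k-1}$ heavy. Everything else is a routine unwinding of uniform choices. Concretely, I would write: \emph{Proof.} The edge $e=(u,v)$ with $u$ light can only be returned when $k=1$, which occurs with probability $1/\ell$; given that, $u_0=u$ with probability $1/n$, the degree test passes, and $u_1=v$ for exactly one of the $\theta$ equally likely choices of $j$, hence with probability $1/\theta$; the three independent events give $\frac{1}{\ell n\theta}$. $\hfill\square$
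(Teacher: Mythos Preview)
Your proposal is correct and follows essentially the same approach as the paper: both argue that the light edge $(u,v)$ is returned precisely when $k=1$, $u_0=u$, and $j$ indexes $v$, with these independent events having probabilities $1/\ell$, $1/n$, and $1/\theta$ respectively. Your additional remark that no $k\ge 2$ walk can output a light edge (because $u_{k-1}$ must be heavy) makes explicit a point the paper leaves implicit, but the argument is otherwise identical.
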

\begin{proof}
The edge $uv$ is chosen exactly when $k=1$ (happens with probability $\frac{1}{\ell}$), $u_0 = u$ (happens with probability $\frac{1}{n}$), and $j$ is such that the $j^{\text{th}}$ neighbor of $u$ is $v$ (happens with probability $\frac{1}{\theta}$). This gives total probability of $\frac{1}{\ell n \theta}$.
\end{proof}

We now analyze the case of heavy edges. Before that, we define for $v$ being a heavy vertex $h_{v,1} = \frac{d_H(v)}{d(v)}$ and for $i \geq 2$
\[
h_{v,i}= h_{v,1} \sum_{w \in N_h(v)} h_{w,i-1} / d_H(v)
\]
and for $v$ being a light vertex and $i$ being any natural number $h_{v,i} = 0$. We use the $h$-values for light vertices in the next section.

\begin{lemma} \label{lem:heavy_prob}
For any heavy edge $(v,w)$
\[
P[u_{k-1} = v, u_k = w | k \geq 2] = (1- h_{v,\ell})\frac{1}{(\ell-1) n \theta}
\]
\end{lemma}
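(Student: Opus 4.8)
\medskip

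The plan is to analyze the constrained random walk of \Cref{alg:sampling_attempt} conditioned on a fixed value $k \geq 2$, compute $P[u_{k-1}=v, u_k=w \mid k]$ for a heavy edge $(v,w)$, and then average over $k \in \{2,\dots,\ell\}$, which all occur with equal probability $\frac{1}{\ell-1}$ conditioned on $k\geq 2$. The key observation is that, after the constraint-respecting first step, the walk passes through heavy vertices, and the quantities $h_{v,i}$ are exactly designed to track the ``leakage'' probability --- the chance that the walk visiting $v$ at some intermediate position fails to stay heavy for the remaining required steps. Concretely, I would first argue that for a walk of length exactly $k$, conditioned on $k$, the probability that the walk reaches a \emph{heavy} vertex $v$ at position $k-1$ (having started light and respected all heavy-constraints and the Bernoulli rejection at step 1) has a clean closed form; then the last step from $v$ to the specific neighbor $w$ contributes a factor $\frac{1}{d(v)}$; and the $h$-terms enter when we sum over $k$ because shorter-than-$\ell$ walks ``lose'' exactly the mass captured by the iterated $h$-recursion.

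\medskip

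In more detail, the first step is to set up the per-$k$ computation. Fix $k\geq 2$. The event $\{u_{k-1}=v, u_k=w\}$ requires: $u_0$ is some light vertex $u$ (probability $\frac1n$ each), the Bernoulli-$d(u)/\sqrt{2m}$ test passes (this is implicit in lines 2--4: picking $j\in[\theta]$ and taking the $j$th neighbor succeeds with probability $d(u)/\theta$, and $\theta=\lceil\sqrt{2m}\rceil$), then $u_1,\dots,u_{k-1}$ are all heavy with $u_{k-1}=v$, and finally $u_k=w$. I would show that the contribution telescopes: the probability of starting at a light $u$, stepping to a heavy $u_1$, is $\sum_{u \text{ light}} \frac1n \cdot \frac{d(u)}{\theta}\cdot\frac{1}{d(u)} = \frac{(\text{number of light}\to\text{heavy directed edges})}{n\theta}$, and each subsequent heavy-to-heavy step from $x$ to $y$ contributes $\frac{1}{d(x)}$ with the constraint that $y$ is heavy. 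Summing over all admissible paths ending at $v$, and using $d_H(\cdot)$ to count heavy neighbors, one gets that $P[u_{k-1}=v \mid k] \cdot d(v)$ equals a sum over light starting vertices times a product structure that, crucially, I claim equals $\frac{d_H(\text{stuff})}{n\theta}$ expressions matching the $h$-recursion. The cleanest route is probably induction on $k$: define $g_{k}(v)$ to be the probability the length-$k$ walk conditioned on $k$ has $u_{k-1}=v$ heavy and all constraints satisfied, and show $g_k(v) = \frac{1}{n\theta}(\text{something} - h\text{-corrections})$.

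\medskip

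The slicker and probably intended argument runs in the opposite direction: consider an \emph{idealized infinite} (or length-$\ell$ unconstrained-by-lightness) process and show that $h_{v,\ell}$ is precisely the probability that a walk which \emph{should} have placed heavy vertex $v$ at the relevant position instead ``escaped'' by hitting a light vertex within the allotted steps; then $1-h_{v,\ell}$ is the surviving mass. I would verify the base case $h_{v,1}=d_H(v)/d(v)$ directly (probability a single step out of $v$ lands on a heavy vertex), check the recursive step against the definition $h_{v,i}=h_{v,1}\sum_{w\in N_h(v)}h_{w,i-1}/d_H(v)$ by conditioning on the first heavy neighbor reached, and then assemble the averaging over $k\in\{2,\dots,\ell\}$ to produce the stated $(1-h_{v,\ell})\frac{1}{(\ell-1)n\theta}$. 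The main obstacle I anticipate is bookkeeping the direction of the recursion correctly: the walk builds up from a light vertex forward, but the $h$-values are naturally defined ``backward'' from $v$, and one must be careful that $\sum_{u}(\text{light})\cdots$ contributions reassemble into the factor $\frac{1}{n\theta}$ uniformly for \emph{every} heavy $v$ --- i.e., that the sum over light starting points and heavy intermediate paths of the appropriate transition-probability products is exactly $\frac{1}{\theta}$ per unit, independent of $v$, with all $v$-dependence isolated into the single factor $(1-h_{v,\ell})$. Establishing that clean separation, rather than the individual step probabilities, is the crux.
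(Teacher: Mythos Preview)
Your proposal is correct and follows essentially the same inductive strategy as the paper, with one organizational difference worth noting. You propose to fix $k$, compute $g_k(v) := P[u_{k-1}=v \mid k]$ by induction on $k$, and then average over $k\in\{2,\dots,\ell\}$; carrying this out gives $g_j(v) = (h_{v,j-2}-h_{v,j-1})\tfrac{d(v)}{n\theta}$ (with $h_{v,0}:=1$), and the average telescopes to the stated expression. The paper instead bakes the averaging into the induction: it sets up the hypothesis as $P[u_{k-1}=v \mid k \sim U\{2,\dots,r\}] = (1-h_{v,r})\tfrac{d(v)}{(r-1)n\theta}$ and inducts on $r$, using the index-shift observation that $u_{k-2}$ with $k\sim U\{3,\dots,r\}$ is distributed as $u_{k-1}$ with $k\sim U\{2,\dots,r-1\}$, which feeds the hypothesis directly into the step. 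Both routes do the same computation; the paper's packaging just avoids writing out the per-$k$ terms and the telescoping separately. Your second ``escape-probability'' reading of $h_{v,i}$ is also correct, but as you anticipated the direction mismatch (arriving at $v$ versus walking forward from $v$) means it does not shortcut the induction --- the paper does not use it either.
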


\begin{proof}
$k$ is chosen uniformly at random from $\{2,\cdots,r\}$. We show by induction on $r$ that in this setting, it holds that $P[u_{k-1} = v | r] = (1- h_{v,r})\frac{d(v)}{(r-1) n \theta}$. If we show this, the lemma follows by substituting $r = \ell$ and by unifomity of $u_\ell$ on the neighborhood of $u_{\ell-1}$, thus decreasing the probability by a factor of $d(v)$.

For $r=2$, the claim holds because when $k=2$, there is probability $\frac{1}{\ell n \theta}$ that we come to $v$ from any of the $(1-h_{v,1}) d(v)$ adjacent light vertices. 

We now show the induction step. Consider the vertices $w \in N(V)$ and consider $P[u_{k-2} = w | r]$. Since $k$ is chosen uniformly from $\{2, \cdots, r\}$, this is the same as $P[u_{k-1} = w]$ for $k$ chosen uniformly from $\{1, \cdots, r-1\}$. Now 
\begin{align}
P[u_{k-1} = v | r] =& \sum_{w \in N(v)} P[u_{k-2} = w|r] P[u_{k-1}=v | u_{k-2}=w] \\
=& \sum_{w \in N_L(v)} P[u_0 = w] P[u_1 = v | u_0 = w] P[k = 2] \\&+ \sum_{w \in N_H(v)} P[u_{k-1} = w | r-1] P[u_{k} = v | u_{k-1} = w] P[k > 2] \\
=& \sum_{w \in N_L(v)} \frac{1}{n} \frac{1}{\theta} \frac{1}{r-1} + \sum_{w \in N_H(v)} (1- h_{w,r-1})\frac{d(w)}{(r-2) n \theta} \frac{1}{d(w)} \frac{r-2}{r-1} \\
=& (1-h_{v,1}) \frac{d(v)}{(r-1) n \theta} + \sum_{w \in N_H(v)} (1- h_{w,r-1})\frac{1}{(r-1) n \theta} \\
=& (1-h_{v,1}) \frac{d(v)}{(r-1) n \theta} + h_{v,1} d(v) (1- \sum_{w \in N_h(v)}h_{w,r-1}/d_H(v))\frac{1}{(r-1) n \theta} \\
=& (1-h_{v,1}) \frac{d(v)}{(r-1) n \theta} + (h_{v,1} - h_{v,r})\frac{d(v)}{(r-1) n \theta} \\
=& (1 - h_{v,r})\frac{d(v)}{(r-1) n \theta}
\end{align}
\end{proof}

Before putting it all together, we will need the following bound on $h_{v,i}$.
\begin{lemma} \label{lem:h_bound}
For any $v \in V_H(G)$ and $k \geq 1$ it holds that
\[
h_{v,k} \leq 2^{-k}
\]
\end{lemma}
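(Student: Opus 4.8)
The plan is to prove the bound $h_{v,k} \le 2^{-k}$ by induction on $k$, using the recursive definition of the $h$-values together with the crucial structural fact that for a heavy vertex $v$ at least half of its neighbors are light. Let me first recall why that structural fact holds: if $v$ is heavy then $d(v) > \theta = \lceil\sqrt{2m}\rceil$, so $d_H(v) \le$ (number of heavy vertices) and, since each heavy vertex has degree exceeding $\sqrt{2m}$, there can be at most $2m/\theta \le \sqrt{2m} < \theta < d(v)$ heavy vertices in total; actually the cleanest bound is $d_H(v) \le$ (total number of heavy vertices) $\le \frac{2m}{\theta+1} \le \frac{2m}{\sqrt{2m}} = \sqrt{2m} \le \theta < d(v)$, wait this only gives $d_H(v) < d(v)$, not $d_H(v)\le d(v)/2$. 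The correct argument is: the number of heavy vertices is at most $\frac{2m}{\theta} \le \sqrt{2m} \le \theta$, and since $v$ is heavy it has $d(v) > \theta \ge (\text{number of heavy vertices}) \ge d_H(v)$; to get the factor $1/2$ one notes $d(v) > \theta$ so $d(v) \ge \theta + 1 > \sqrt{2m} \ge (\text{number of heavy vertices})$, hence $d_H(v) \le \sqrt{2m} < d(v)$, and since $d(v) \ge \sqrt{2m}$ as well we get $d_H(v) \le d(v)$ — I will need to track constants carefully here, but the intended conclusion is $h_{v,1} = d_H(v)/d(v) \le 1/2$, which is exactly the ``at least half the neighbors of a heavy vertex are light'' claim already invoked in the techniques overview.

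For the base case $k=1$, the statement $h_{v,1} = d_H(v)/d(v) \le 1/2 = 2^{-1}$ is precisely this structural fact. For the inductive step, assume $h_{w,k-1} \le 2^{-(k-1)}$ for every vertex $w$ (heavy or light — for light vertices $h_{w,k-1}=0$ so the bound is trivial). Then from the definition,
\[
h_{v,k} = h_{v,1} \cdot \frac{1}{d_H(v)}\sum_{w \in N_h(v)} h_{w,k-1} \le h_{v,1}\cdot \frac{1}{d_H(v)} \cdot d_H(v)\cdot 2^{-(k-1)} = h_{v,1}\cdot 2^{-(k-1)} \le \tfrac12 \cdot 2^{-(k-1)} = 2^{-k},
\]
where the last inequality again uses $h_{v,1}\le 1/2$. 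This closes the induction. (If $v$ is light the bound $h_{v,k}=0\le 2^{-k}$ is immediate and needs no induction, so the argument above only concerns heavy $v$, and the averaging over $N_h(v)$ is well-defined since a heavy $v$ has $d_H(v)$ possibly zero — in which case $h_{v,k}=0$ by convention or the empty-sum reading, still fine.)

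The only real obstacle is the careful verification of the base case inequality $d_H(v)\le d(v)/2$: one must check that with $\theta=\lceil\sqrt{2m}\rceil$ the number of heavy vertices is at most $\theta$ (summing degrees: $\sum_{v \text{ heavy}} d(v) \le 2m$ and each term exceeds $\theta$, so there are fewer than $2m/\theta \le \theta$ of them) and then that a heavy vertex's degree $d(v) \ge \theta+1$ is at least twice the heavy-vertex count — this last step is where the precise choice $\theta = \lceil\sqrt{2m}\rceil$ matters and may require assuming $m$ is not too small or absorbing a rounding term; I expect the paper handles this exactly as in \cite{original_paper}. Everything after the base case is a routine one-line induction.
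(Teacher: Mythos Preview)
Your approach is essentially identical to the paper's: induction on $k$ with base case $h_{v,1}\le 1/2$ (which the paper, like you, attributes to \cite{original_paper}) followed by the same one-line inductive step averaging the bound over $N_h(v)$. For the base case the paper avoids your $\theta{+}1$ rounding gymnastics by directly bounding $d_H(v)$ by the total number of heavy vertices, stated as at most $m/\theta$, and combining with $d(v)>\theta$ to get $h_{v,1}\le (m/\theta)\cdot(1/\theta)\le m/(2m)=1/2$.
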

\begin{proof}
We first prove that for any $w \in V(G)$, it holds that $h_{v,1} \leq 1/2$. This has been shown in \cite{original_paper} and we include this for completeness. We then argue by induction that this implies the lemma.

Since $v$ is heavy, it has more than $\theta$ neighbors. Moreover, there can be at most $\frac{m}{\theta}$ heavy vertices, meaning that the fraction of heavy neighbors of $v$ can be bounded as folows
\[
h_{v,1} \leq \frac{m}{\theta} \frac{1}{\theta} = \frac{m}{\lceil\sqrt{2m}\rceil} \frac{1}{\lceil\sqrt{2m}\rceil} \leq \frac{1}{2}
\]

We now show the claim by induction. We have shown the base case and it therefore remains to prove the induction step:
\[
h_{v,i}= h_{v,1} \sum_{w \in N_h(v)} h_{w,i-1} / d_H(v) \leq \frac{1}{2} \sum_{w \in N_h(v)} 2^{-(i-1)} / d_H(v) = 2^{-i}
\]
\end{proof}

We can now prove the following theorem
\begin{theorem}
For $\epsilon \leq \frac{1}{2}$, the \Cref{alg:sample_edge} has expected running time of $O(\frac{n}{\sqrt{m}} \log \frac{1}{\epsilon})$ and samples an edge from distribution pointwise $\epsilon$-close to uniform.
\end{theorem}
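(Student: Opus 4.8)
The plan is to establish the two claims---correctness of the output distribution and the running time bound---separately, since they are largely independent given the lemmas already proved.

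\textbf{Output distribution.} For the distributional claim I would combine \Cref{obs:light_uniform} and \Cref{lem:heavy_prob} to compute, for every edge $e=(u,v)$, the probability that a single call to \texttt{Sample\_edge} returns $e$ conditioned on that call not failing. For a light edge, the (unconditional, per-attempt) probability is exactly $\frac{1}{\ell n \theta}$ by \Cref{obs:light_uniform}. For a heavy edge $(v,w)$, \Cref{lem:heavy_prob} gives $P[u_{k-1}=v,u_k=w \mid k\ge 2] = (1-h_{v,\ell})\frac{1}{(\ell-1)n\theta}$; multiplying by $P[k\ge 2] = \frac{\ell-1}{\ell}$ yields per-attempt probability $(1-h_{v,\ell})\frac{1}{\ell n \theta}$. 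Thus every edge is returned (per attempt) with probability lying in $[(1-h_{v,\ell})\frac{1}{\ell n \theta}, \frac{1}{\ell n \theta}]$, and by \Cref{lem:h_bound} together with the choice $\ell = \lceil \log_2 \frac1\epsilon\rceil + 1$ we get $h_{v,\ell} \le 2^{-\ell} \le \epsilon/2 \le \epsilon$, so each edge's per-attempt probability is within a multiplicative $(1\pm\epsilon)$ factor of the common value $\frac{1}{\ell n \theta}$ (the light edges being exactly at the top of this range; note $1-\epsilon \le 1 - h_{v,\ell}$). Conditioning on success (the algorithm restarts on failure, so the returned edge is distributed as a single successful attempt) only renormalizes by the total success probability, a quantity independent of $e$, hence preserves all pairwise ratios; therefore the output distribution $\mathcal P$ satisfies $1-\epsilon \le \mathcal P(e)/\mathcal U(e) \le 1+\epsilon$ for the uniform distribution $\mathcal U$ over the $2m$ directed edges, i.e.\ $\mathcal P$ is pointwise $\epsilon$-close to uniform. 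Finally, forgetting directions (as set up in \Cref{sec:preliminaries}) maps this to a pointwise $\epsilon$-close-to-uniform distribution on undirected edges, since the two directed copies of an undirected edge are handled symmetrically by the argument.

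\textbf{Running time.} The expected running time is (expected cost of one attempt) $\times$ (expected number of attempts), the latter being $1/p_{\text{succ}}$ where $p_{\text{succ}}$ is the per-attempt success probability. Summing the per-attempt return probabilities over all $2m$ directed edges and using $(1-h_{v,\ell}) \ge 1-\epsilon \ge 1/2$ gives $p_{\text{succ}} \ge 2m \cdot \frac{1}{2\ell n \theta} = \frac{m}{\ell n \theta} = \Theta\!\big(\frac{\sqrt m}{n \log \epsilon^{-1}}\big)$ since $\theta = \Theta(\sqrt m)$ and $\ell = \Theta(\log\epsilon^{-1})$; hence the expected number of attempts is $O\!\big(\frac{n \log \epsilon^{-1}}{\sqrt m}\big)$. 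For the cost of a single attempt: it performs one vertex sample, one degree query, then a walk that is immediately aborted unless $u_0$ is light and every subsequent vertex is heavy. The number of queries in an attempt is at most $k+1 \le \ell+1 = O(\log \epsilon^{-1})$ deterministically, which already yields expected time $O\!\big(\frac{n}{\sqrt m}\log^2 \epsilon^{-1}\big)$---a log factor too many. To shave the extra logarithm I would argue that a typical attempt is \emph{much} cheaper: an attempt does $i$ steps only if $u_0$ is light (probability $\le 1$, but crucially the walk reaching step $i$ forces $u_1,\dots,u_{i-1}$ all heavy) and we can bound the expected number of steps by $\sum_{i\ge 1} P[\text{attempt performs an }i\text{th step}]$. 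Conditioned on $k$, reaching step $i\le k$ has probability essentially $\sum_{v \text{ light}} \frac1n \cdot (\text{prob the length-}(i-1)\text{ continuation stays heavy})$, and the heavy-vertex bound from \Cref{lem:h_bound} (the same summation controlling $h_{v,i}$) shows this decays geometrically in $i$; averaging over $k$ uniform in $[\ell]$ leaves the expected number of steps per attempt $O(1)$. Multiplying the $O(1)$ expected per-attempt cost by $O\!\big(\frac{n}{\sqrt m}\log\epsilon^{-1}\big)$ expected attempts gives the claimed $O\!\big(\frac{n}{\sqrt m}\log\frac1\epsilon\big)$.

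\textbf{Main obstacle.} The distributional part is essentially bookkeeping on top of the three lemmas and I expect it to be routine. The real work is the running-time refinement: getting $O\!\big(\frac{n}{\sqrt m}\log\frac1\epsilon\big)$ rather than $O\!\big(\frac{n}{\sqrt m}\log^2\frac1\epsilon\big)$ requires showing the expected length of a sampling attempt is $O(1)$, not merely $O(\ell)$. This hinges on the observation that continuing the walk requires hitting heavy vertices repeatedly, an event whose probability decays geometrically by the same mechanism that drives \Cref{lem:h_bound}; I would need to set this up carefully (summing over step index $i$, and handling the averaging over the random length $k$ and over the starting vertex) to make the geometric decay rigorous. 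A secondary point to get right is that restarting on failure makes the returned edge exactly a "successful attempt" conditioned on success, so no bias is introduced by the rejection loop---this is standard but worth stating explicitly.
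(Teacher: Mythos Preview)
Your distributional argument is exactly the paper's: combine \Cref{obs:light_uniform} and \Cref{lem:heavy_prob} to place every per-attempt return probability in the interval $[(1-\tfrac{\epsilon}{2})\tfrac{1}{\ell n\theta},\,\tfrac{1}{\ell n\theta}]$ via \Cref{lem:h_bound} and $\ell=\lceil\log_2\tfrac1\epsilon\rceil+1$, then turn the resulting pairwise ratio bound $[1-\tfrac{\epsilon}{2},(1-\tfrac{\epsilon}{2})^{-1}]$ into pointwise $\epsilon$-closeness using $\epsilon\le\tfrac12$. (You wrote $h_{v,\ell}\le\epsilon/2\le\epsilon$ and then used $\epsilon$; keep the $\epsilon/2$, as the paper does, so that the upper ratio $(1-\epsilon/2)^{-1}\le 1+\epsilon$ goes through cleanly.)

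On the running time you are in fact \emph{more} careful than the paper. The paper only bounds the expected number of calls to \texttt{Sampling\_attempt}: it sums per-attempt probabilities to get $P[\text{success}]\ge (1-\epsilon)\tfrac{m}{\ell n\theta}$ and concludes that the expected number of attempts is $\tfrac{\ell n\theta}{(1-\epsilon)m}=O(\tfrac{n}{\sqrt m}\log\tfrac1\epsilon)$, then stops---implicitly treating each attempt as unit cost. Your ``main obstacle'' is therefore a gap the paper simply does not address. Your proposed fix is correct and short: once the walk is at a heavy vertex $u_{i-1}$, the probability that $u_i$ is heavy is $h_{u_{i-1},1}\le\tfrac12$ (the base case of \Cref{lem:h_bound}), so for $i\ge 3$ the event ``iteration $i$ is executed'' has probability at most $2^{-(i-2)}$, giving expected work $O(1)$ per attempt regardless of $k$. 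Multiplying by the expected number of attempts recovers the stated bound without the spurious extra $\log\tfrac1\epsilon$.
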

\begin{proof}
We first show that in an iteration of \Cref{alg:sample_edge}, each edge is sampled with probability in $[(1- \epsilon)\frac{1}{\ell n \theta}, \frac{1}{\ell n \theta}]$. 

First, let $(v,w)$ be a light edge. Then it is chosen exactly when $k=1$, $u_0 = v$ and $w$ is the $j^{\text {th }}$ neighbor of $v$
\[
P[k=1, u_0 = v, w = u_1] = P[k=1] P[u_0 = v|k=1] P[w = u_1 | u_0 = v, k=1] = \frac{1}{\ell n \theta}
\]

A heavy edge $(v,w)$ is chosen when $k \geq 2$, $u_{k-1} = v$ and $w = u_k$. Using \Cref{lem:heavy_prob},
\begin{align}
P[k \geq 2, u_{k-1} = v, u_k = w] &= P[k \geq 2]P[u_{k-1} = v, u_k = w | k \geq 2] \\
&= \frac{\ell - 1}{\ell} (1- h_{v,\ell})\frac{1}{(\ell-1) n \theta} \\
&= (1- h_{v,\ell})\frac{1}{\ell n \theta}\\
&\geq (1- 2^{-\ell})\frac{1}{\ell n \theta} \geq (1-\tfrac{1}{2}\epsilon) \frac{1}{\ell n \theta}
\end{align}
where the inequality holds by \Cref{lem:h_bound}.

\paragraph{Approximate uniformity.}Let $e$ be the sampled edge and $e', e''$ some fixed edges. Then since $P[e = e' | \text{success}] = \frac{P[e = e']}{P[\text{success}]}$ and for any fixed $e'$ it holds that
\[
(1-\tfrac{1}{2}\epsilon) \frac{1}{\ell n \theta} \leq P[e' = e] \leq \frac{1}{\ell n \theta}
\]
it follows that
\[
1-\tfrac{1}{2}\epsilon \leq \frac{P[e = e']}{P[e = e'']} \leq (1-\tfrac{1}{2}\epsilon)^{-1} \leq 1+\epsilon
\]
where the last inequality holds because $\epsilon \leq \frac{1}{2}$. This implies that the distribution is pointwise $\epsilon$-close to uniform.

\paragraph{Time complexity.}Since for every fixed edge $e'$, the probability that $e = e'$ is at least $(1-\epsilon) \frac{1}{\ell n \theta}$, the total success probability is
\[
P[\text{success}] = P[\bigvee e = e'] = \sum_{e' \in E} P[e = e'] \geq m (1-\epsilon) \frac{1}{\ell n \theta}
\]
where the second equality holds by disjointness of the events. The number of iterations of \Cref{alg:sample_edge} has geometric distribution with success probability of $m (1-\epsilon) \frac{1}{\ell n \theta}$. The expectated number of calls of \Cref{alg:sampling_attempt} is then
\[
\frac{\ell n \theta}{(1-\epsilon) m} = \frac{\sqrt{2} (\lceil \log_2 \frac{1}{\epsilon} \rceil+1) n}{(1-\epsilon) \sqrt{m}} = O\left(\frac{n}{\sqrt{m}} \log \frac{1}{\epsilon}\right)
\]
where the second equality is a substitution for $\theta$ and $\ell$.

\end{proof}

\section{Exact sampling} \label{sec:exact_sampling}
In this section, we describe an algorithm that samples an edge uniformly at random in wxpected time $O(\frac{n \log n}{\sqrt{m}})$. This algorithm works by using the approximate sampling algorithm (\Cref{alg:sample_edge}) with probability $1-\frac{1}{n^3}$ with $\epsilon = \frac{1}{n^3}$ and with probability $\frac{1}{n^3}$ a subrutine is called which corrects the deviations from the uniform distribution. This subrutine runs in time $O(m \log n)$ and it, therefore, does not influence the total expected running time. This approach can also be used with the algorithm from \cite{original_paper} but does not result in sublinear running time.

\bigskip \noindent
In order to correct the deviations from the uniform distribution, the algorithm has to know the probabilities with thich the individual edges are sampled -- that is, the values $h_{v,\ell}$ have to be computed. In fact, we compute the values of $h_{v,i}$ for all $v \in V, i \in [\ell]$. We do this iteratively in $\ell$ phases using the definition of $h_{v,i}$. Specifically, in the $i$-th phase, we have values of $h_{v,i-1}$ already computed, allowing us to compute the values of $h_{v,i}$. Each phase takes $O(m)$ time and there are $\ell = O(\log n)$ many of them, giving us a running time of $O(m \log n) \subseteq o(n^3)$. 

\bigskip \noindent
We now show the algorithm and argue its correctness.

\smallskip
\begin{center}
\begin{minipage}{.65 \textwidth}
\begin{algorithm}[H]
$h_{v,1} \leftarrow \frac{d_H(v)}{d(v)}$ for all $v$\\
\For{$i \in \{2,\cdots,\ell\}$}{
	$h_{v,i} \leftarrow h_{v,1} \sum_{w \in N_h(v)} h_{w,i-1} / d_H(v)$ for all $v$\\
}
With probability $1-\frac{1}{n^3}$, run \texttt{sample\_edge($1/n^3$)} and return its output\\
Otherwise, pick an edge at random, picking $e = (v,w)$ with probability~$\frac{1/m- (1-\epsilon) (1-h_{v,\ell})/\sum_{uw \in E} (1-h_{u,\ell})}{\epsilon}$\\

\caption{\texttt{Sample\_exactly}} \label{alg:sample_exactly}
\end{algorithm}
\end{minipage}
\end{center}
\smallskip

\medskip \noindent
Note that the sum of the probabilities is
\begin{gather}
\sum_{uw \in E}\frac{1/m- (1-\epsilon) (1-h_{v,\ell})/\sum_{u'w' \in E}(1-h_{u,\ell})}{\epsilon} =\\
= \frac{1 - (1-\epsilon) \sum_{uw \in E} (1-h_{u,\ell}) / \sum_{uw \in E} (1-h_{u,\ell}) }{\epsilon} = 1
\end{gather}
and the probabilities are non-negative:
\[
\frac{1/m- (1-\epsilon) (1-h_{v,\ell})/\sum_{uw \in E} (1-h_{u,\ell})}{\epsilon} \geq \frac{1/m- (1-\epsilon) /\sum_{uw \in E} (1-\epsilon)}{\epsilon} = \frac{1/m- 1/m}{\epsilon} = 0
\]
meaning that the probabilities indeed form a valid distribution.

\begin{theorem}
The \Cref{alg:sample_exactly} samples an edge from the uniform distribution (in expected time $O(\frac{n \log n}{\sqrt{m}})$.
\end{theorem}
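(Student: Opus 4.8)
The statement combines a correctness claim (every edge is output with probability exactly $1/m$) and a running‑time claim, and I would treat them separately. The whole correctness argument rests on one preliminary observation: I would first pin down the distribution produced by the call to \texttt{sample\_edge}$(\epsilon)$ in line~4. Reusing the per‑iteration probabilities established in the proof of the previous theorem, together with the convention $h_{v,\ell}=0$ for light $v$ (which unifies \Cref{obs:light_uniform} and \Cref{lem:heavy_prob}), a single iteration of \Cref{alg:sample_edge} returns a fixed edge $(v,w)$ with probability $(1-h_{v,\ell})\tfrac{1}{\ell n\theta}$; hence, writing $S:=\sum_{uw\in E}(1-h_{u,\ell})$, the success probability of one iteration is $\tfrac{S}{\ell n\theta}$, and conditioning on success, \texttt{sample\_edge}$(\epsilon)$ outputs $(v,w)$ with probability $q_{v,w}:=(1-h_{v,\ell})/S$. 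I would also note, by a one‑line induction matching the recursive definition of the $h$‑values, that the quantities $h_{v,i}$ computed in lines~1--3 are exactly the ones appearing in \Cref{lem:heavy_prob,lem:h_bound}, so the numbers fed to the correction step are the correct sampling probabilities.

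Given this, the correctness claim is a short mixture computation with $\epsilon=n^{-3}$: with probability $1-\epsilon$ the output equals that of \texttt{sample\_edge}$(\epsilon)$, contributing $(1-\epsilon)q_{v,w}$ to $\Pr[\text{output}=(v,w)]$, and with the remaining probability $\epsilon$ the correction step of line~5 picks $(v,w)$ with probability $\tfrac{1/m-(1-\epsilon)q_{v,w}}{\epsilon}$, contributing $1/m-(1-\epsilon)q_{v,w}$; the two terms sum to $1/m$, as desired. The only side condition is that line~5 really samples from a probability distribution, i.e.\ that the listed numbers are nonnegative and sum to $1$ — but this is precisely what is verified in the two displays immediately preceding the theorem (nonnegativity uses $h_{u,\ell}\le 2^{-\ell}\le \tfrac{\epsilon}{2}$ from \Cref{lem:h_bound}), so I would just cite that.

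For the running time I would condition on the coin flip of line~4. In the common case (probability $1-n^{-3}$) the cost is the expected cost of \texttt{sample\_edge}$(n^{-3})$, which by the previous theorem is $O\!\big(\tfrac{n}{\sqrt m}\log n^{3}\big)=O\!\big(\tfrac{n}{\sqrt m}\log n\big)$. In the correction case (probability $n^{-3}$) one computes all $d_H(v)$ and then the $\ell=O(\log n)$ layers of $h$‑values in $O(m)$ time per layer, i.e.\ $O(m\log n)$, and then samples from an explicit distribution on the edge set in $O(m)$ more time; so this branch costs $O(m\log n)$, and since it is entered with probability only $n^{-3}$ it contributes $n^{-3}\cdot O(m\log n)\subseteq O\!\big(\tfrac{\log n}{n}\big)$ to the expectation (using $m\le n^{2}$), which is dominated by the common‑case term. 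Here I would remark that the $h$‑value computation written in lines~1--3 should be understood as performed lazily, only when the correction branch is reached, so that it does not inflate the running time of the common case. Summing the two contributions gives expected time $O\!\big(\tfrac{n\log n}{\sqrt m}\big)$.

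I do not anticipate a real obstacle: the correction distribution of line~5 is engineered so that the mixture telescopes to $1/m$, and the running‑time bookkeeping is routine once one observes that the linear‑time correction is invoked only with probability $n^{-3}$. The one step that deserves to be stated carefully is the preliminary identification of $q_{v,w}$, since everything downstream depends on the correction step being handed the true per‑edge probabilities of \texttt{sample\_edge}.
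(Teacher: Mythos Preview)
Your proposal is correct and follows essentially the same route as the paper: the law-of-total-probability mixture computation for correctness and a conditioning on the coin flip for the running time. You are in fact more careful than the paper on two points---you explicitly derive the conditional output distribution $q_{v,w}=(1-h_{v,\ell})/S$ of \texttt{sample\_edge} before using it, and you flag that the $h$-value computation must be executed lazily (only in the rare branch) for the stated expected time to hold---both of which the paper glosses over.
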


\begin{proof}
Denote by $\mathcal{E}$ the event that the edge is sampled on line 1 and line 2 is, therefore, not executed. Let $e$ be the edge picked by the algorithm and $e'$ a fixed heavy edge. Note that $\epsilon = \frac{1}{n^2}$.
\begin{align}
P[e' = e] =& P[e'=e | \mathcal{E}] P[\mathcal{E}] + P[e'=e | \neg \mathcal{E}] P[ \neg \mathcal{E}] \\
=& \frac{1- h_{v,\ell}}{\sum_{uw \in E} (1-h_{u,\ell})}(1-\epsilon) + \frac{1/m- (1-\epsilon) (1-h_{v,\ell})/\sum_{uw \in E}(1-h_{u,\ell})}{\epsilon} \epsilon\\
=& \frac{1}{m}
\end{align}

\noindent
The expected time spent on line 1 is $O(\frac{n \log n}{\sqrt{m}})$. Since the probability of executing line 2 is $\frac{1}{n^3}$ in which case it uses $o(n^3)$ time, the expected time spent on line 2 is $o(1)$. Together, this shows the claimed bounds on the time complexity.
\end{proof}

\section{On the (un)importance of sampling exactly} \label{sec:doesnt_matter}
In this section, we give an argument that the success probability of an algorithm with expected running time $o(n^2) \subseteq o(n+m)$, changes by $o(1)$ when using approximate sampling with $\epsilon = \frac{1}{n^2}$. This follows from the existence of a coupling of approximate and exact edge sampling with small probability that the two samples from the coupled distributions are different.

\begin{theorem}
Given a graph $G$, let $\mathcal{O}_G$ be some subset of allowed outputs and let $\mathcal{A}$ be an algorithm in the extended sublinear graph model such that with probability at least $p$, $\mathcal{A}(G) \in \mathcal{O}_G$. Then for algorithm $\mathcal{A}'$, defined as $\mathcal{A}$ with every uniform random edge query replaced by an approximately (not necessarily pointwise) uniform random edge query with $\epsilon = \frac{1}{n^2}$, it holds that $P[\mathcal{A}'(G) \in \mathcal{O}_G] \geq p-o(1)$.
\end{theorem}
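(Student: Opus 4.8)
The plan is to build an explicit coupling between a single exact uniform random edge query and a single approximately uniform one, and then extend it across all queries made by $\mathcal{A}$. For one query, let $\mathcal{Q}$ be the true uniform distribution on edges and $\mathcal{P}$ the distribution actually produced by the approximate sampler, which satisfies $|\mathcal{P}-\mathcal{Q}|_{TV} \leq \epsilon = \frac{1}{n^2}$. By the standard coupling characterization of total variation distance, there is a joint distribution on pairs $(e, e')$ with marginals $\mathcal{P}$ and $\mathcal{Q}$ such that $P[e \neq e'] = |\mathcal{P}-\mathcal{Q}|_{TV} \leq \frac{1}{n^2}$. We run $\mathcal{A}$ and $\mathcal{A}'$ on the same randomness, feeding them coupled answers to each random edge query (and identical answers to all other queries, which are identically distributed in both models); as long as every coupled pair agrees, the two executions are identical, so $\mathcal{A}'(G) = \mathcal{A}(G)$ on that event.

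The key quantitative step is to bound the number of random edge queries. Since $\mathcal{A}$ has expected running time $o(n^2)$, it makes at most $o(n^2)$ queries in expectation; let $T$ denote the number of random edge queries $\mathcal{A}$ issues, so $E[T] = o(n^2)$. Conditioned on the execution making $t$ such queries, a union bound over the $t$ coupled pairs gives that all of them agree except with probability at most $t/n^2$. Taking expectation over $t$, the probability that the coupled executions ever diverge is at most $E[T]/n^2 = o(n^2)/n^2 = o(1)$. On the complementary event, $\mathcal{A}'(G) = \mathcal{A}(G)$, hence $\mathcal{A}'(G) \in \mathcal{O}_G$ whenever $\mathcal{A}(G) \in \mathcal{O}_G$, which gives
\[
P[\mathcal{A}'(G) \in \mathcal{O}_G] \geq P[\mathcal{A}(G) \in \mathcal{O}_G] - o(1) \geq p - o(1).
\]

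One subtlety I would be careful about: the number of queries $T$ is itself a random variable that may depend on the (coupled) answers, so the union bound must be set up so that the coupling for query $i$ is fixed independently of whether earlier pairs agreed — equivalently, couple a fresh pair at each step using the same coupling construction, and note that the event "the first point of divergence is at step $i$" is contained in "the $i$-th pair disagrees," whose probability is at most $1/n^2$ regardless of the past. Summing the geometric-style bound over all potential steps, weighted by the probability the execution reaches that step, again yields $E[T]/n^2$. The main obstacle is therefore not any deep inequality but making the coupling-over-an-adaptive-sequence argument precise; once $E[T] = o(n^2)$ is in hand (which follows since each query costs at least unit time), the bound $o(1)$ is immediate. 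Finally, I would remark that only $|\mathcal{P}-\mathcal{Q}|_{TV} \leq \epsilon$ is used, so pointwise closeness is not needed here, matching the theorem's hypothesis; and combined with the exact-sampling result this shows polynomially small $\epsilon$ suffices for all practical purposes.
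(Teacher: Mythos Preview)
Your proposal is correct and follows essentially the same approach as the paper: invoke the coupling lemma to couple one uniform edge sample with one approximate sample so they disagree with probability at most $\epsilon=1/n^2$, then union bound over the at most $o(n^2)$ edge queries made by the algorithm to conclude the two executions differ with probability $o(1)$. If anything, your treatment of the adaptive issue (that $T$ is random and may depend on earlier answers) is more careful than the paper's, which simply asserts $k=o(n^2)$ and applies the union bound directly.
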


\begin{proof}
Let $\mathcal{U}$ be the uniform distribution on $E$ and $\mathcal{D}$ a distribution $\epsilon$-close to $\mathcal{U}$. By the coupling lemma \cite[Proposition~4.7]{levinmarkov}, there exists a coupling $(\mathcal{X},\mathcal{Y})$ of $\mathcal{D}$ and $\mathcal{U}$ such that $P[X \neq Y| X \sim \mathcal{X}, Y \sim \mathcal{Y}] = \|\mathcal{D}- \mathcal{U}\|_{TV} \leq \epsilon$. Let $\mathcal{A}_\mathcal{X}(G)$ and $\mathcal{A}'_\mathcal{Y}(G)$ be algorithms $\mathcal{A}, \mathcal{A}'$ sampling from the coupled distribution $(\mathcal{X}, \mathcal{Y})$, executed on input graph $G$. Note that $\mathcal{A}_\mathcal{X}(G) \sim \mathcal{A}(G)$ and $\mathcal{A}'_\mathcal{Y}(G) \sim \mathcal{A}'(G)$.

Let $k$ be the number of samples used by $\mathcal{A}$. Since $\mathcal{A}$ runs in expected time $o(n^2)$, and therefore, $k = o(n^2)$, it is the case that $P[(X_1, \cdots, X_k) \neq (Y_1, \cdots, Y_k)] = o(1)$ where $X_i,Y_i$ are the edges sampled according to $\mathcal{X},\mathcal{Y}$, respectively. This also means that $P[\mathcal{A}_\mathcal{X}(G) \neq \mathcal{A}'_\mathcal{Y}(G)] = o(1)$. Now

\[
P[\mathcal{A}'(G) \not \in \mathcal{O}_G] = P[\mathcal{A}_\mathcal{X}'(G) \not \in \mathcal{O}_G] \geq P[\mathcal{A}_\mathcal{Y}(G) \not \in \mathcal{O}_G] - P[\mathcal{A}_\mathcal{Y}(G) \neq \mathcal{A}'_\mathcal{X}(G)]
\]
from which the result follows since $P[\mathcal{A}_\mathcal{X}(G) \neq \mathcal{A}'_\mathcal{Y}(G)] = o(1)$ as we have argued. 
\end{proof}

\section{Acknowledgements}
I would like to thank my supervisor Mikkel Thorup for helpful discussions.

\bibliographystyle{plainnat}
\bibliography{references}

\begin{thebibliography}{14}
\providecommand{\natexlab}[1]{#1}
\providecommand{\url}[1]{\texttt{#1}}
\expandafter\ifx\csname urlstyle\endcsname\relax
  \providecommand{\doi}[1]{doi: #1}\else
  \providecommand{\doi}{doi: \begingroup \urlstyle{rm}\Url}\fi

\bibitem[Aliakbarpour et~al.(2016)Aliakbarpour, Biswas, Gouleakis, Peebles,
  Rubinfeld, and Yodpinyanee]{Aliakbarpour2016}
Maryam Aliakbarpour, Amartya~Shankha Biswas, Themistoklis Gouleakis, John
  Peebles, Ronitt Rubinfeld, and Anak Yodpinyanee.
\newblock {Sublinear-Time Algorithms for Counting Star Subgraphs with
  Applications to Join Selectivity Estimation}.
\newblock jan 2016.
\newblock URL \url{http://arxiv.org/abs/1601.04233}.

\bibitem[Assadi et~al.(2018)Assadi, Kapralov, and Khanna]{Assadi2018}
Sepehr Assadi, Michael Kapralov, and Sanjeev Khanna.
\newblock {A Simple Sublinear-Time Algorithm for Counting Arbitrary Subgraphs
  via Edge Sampling}.
\newblock nov 2018.
\newblock URL \url{http://arxiv.org/abs/1811.07780}.

\bibitem[Bishnu et~al.(2019)Bishnu, Ghosh, Mishra, and Paraashar]{Bishnu2019}
Arijit Bishnu, Arijit Ghosh, Gopinath Mishra, and Manaswi Paraashar.
\newblock {Inner Product Oracle can Estimate and Sample}.
\newblock jun 2019.
\newblock URL \url{http://arxiv.org/abs/1906.07398}.

\bibitem[Chierichetti and Haddadan(2018)]{Chierichetti2018}
Flavio Chierichetti and Shahrzad Haddadan.
\newblock {On the complexity of sampling vertices uniformly from a graph}.
\newblock In \emph{Leibniz International Proceedings in Informatics, LIPIcs},
  volume 107. Schloss Dagstuhl- Leibniz-Zentrum fur Informatik GmbH, Dagstuhl
  Publishing, jul 2018.
\newblock ISBN 9783959770767.
\newblock \doi{10.4230/LIPIcs.ICALP.2018.149}.

\bibitem[Eden and Rosenbaum(2018)]{original_paper}
Talya Eden and Will Rosenbaum.
\newblock {On Sampling Edges Almost Uniformly}.
\newblock In Raimund Seidel, editor, \emph{1st Symposium on Simplicity in
  Algorithms (SOSA 2018)}, volume~61 of \emph{OpenAccess Series in Informatics
  (OASIcs)}, pages 7:1--7:9, Dagstuhl, Germany, 2018. Schloss
  Dagstuhl--Leibniz-Zentrum fuer Informatik.
\newblock ISBN 978-3-95977-064-4.
\newblock \doi{10.4230/OASIcs.SOSA.2018.7}.
\newblock URL \url{http://drops.dagstuhl.de/opus/volltexte/2018/8300}.

\bibitem[Eden et~al.(2019{\natexlab{a}})Eden, Ron, and Rosenbaum]{Eden2019}
Talya Eden, Dana Ron, and Will Rosenbaum.
\newblock {The arboricity captures the complexity of sampling edges}.
\newblock In \emph{Leibniz International Proceedings in Informatics, LIPIcs},
  volume 132. Schloss Dagstuhl- Leibniz-Zentrum fur Informatik GmbH, Dagstuhl
  Publishing, jul 2019{\natexlab{a}}.
\newblock ISBN 9783959771092.
\newblock \doi{10.4230/LIPIcs.ICALP.2019.52}.

\bibitem[Eden et~al.(2019{\natexlab{b}})Eden, Ron, and
  Rosenbaum]{bounded_arboricity}
Talya Eden, Dana Ron, and Will Rosenbaum.
\newblock {The Arboricity Captures the Complexity of Sampling Edges}.
\newblock In Christel Baier, Ioannis Chatzigiannakis, Paola Flocchini, and
  Stefano Leonardi, editors, \emph{46th International Colloquium on Automata,
  Languages, and Programming (ICALP 2019)}, volume 132 of \emph{Leibniz
  International Proceedings in Informatics (LIPIcs)}, pages 52:1--52:14,
  Dagstuhl, Germany, 2019{\natexlab{b}}. Schloss Dagstuhl--Leibniz-Zentrum fuer
  Informatik.
\newblock ISBN 978-3-95977-109-2.
\newblock \doi{10.4230/LIPIcs.ICALP.2019.52}.
\newblock URL \url{http://drops.dagstuhl.de/opus/volltexte/2019/10628}.

\bibitem[Eden et~al.(2020)Eden, Mossel, and Rubinfeld]{Eden2020}
Talya Eden, Saleet Mossel, and Ronitt Rubinfeld.
\newblock {Amortized Edge Sampling}.
\newblock aug 2020.
\newblock URL \url{http://arxiv.org/abs/2008.08032}.

\bibitem[Goldreich and Ron(2008)]{Goldreich2008}
Oded Goldreich and Dana Ron.
\newblock {Approximating average parameters of graphs}.
\newblock \emph{Random Structures and Algorithms}, 32\penalty0 (4):\penalty0
  473--493, jul 2008.
\newblock ISSN 10429832.
\newblock \doi{10.1002/rsa.20203}.
\newblock URL \url{http://doi.wiley.com/10.1002/rsa.20203}.

\bibitem[Gonen et~al.(2011)Gonen, Ron, and Shavitt]{Gonen2011}
Mira Gonen, Dana Ron, and Yuval Shavitt.
\newblock {Counting Stars and Other Small Subgraphs in Sublinear Time}.
\newblock Technical report, 2011.

\bibitem[Levin et~al.(2017)Levin, Peres, and Wilmer]{levinmarkov}
D.A. Levin, Y.~Peres, and E.L. Wilmer.
\newblock \emph{Markov Chains and Mixing Times}.
\newblock American Mathematical Soc., 2017.
\newblock ISBN 9780821886274.

\bibitem[McGregor(2014)]{McGregor2014}
Andrew McGregor.
\newblock Graph stream algorithms: A survey.
\newblock \emph{SIGMOD Rec.}, 43\penalty0 (1):\penalty0 9–20, May 2014.
\newblock ISSN 0163-5808.
\newblock \doi{10.1145/2627692.2627694}.
\newblock URL \url{https://doi.org/10.1145/2627692.2627694}.

\bibitem[Muthukrishnan(2005)]{Muthukrishnan2005}
S.~Muthukrishnan.
\newblock Data streams: Algorithms and applications.
\newblock \emph{Found. Trends Theor. Comput. Sci.}, 1\penalty0 (2):\penalty0
  117–236, August 2005.
\newblock ISSN 1551-305X.
\newblock \doi{10.1561/0400000002}.
\newblock URL \url{https://doi.org/10.1561/0400000002}.

\bibitem[Ron(2019)]{Ron2019}
Dana Ron.
\newblock {Sublinear-Time Algorithms for Approximating Graph Parameters}.
\newblock pages 105--122. Springer, Cham, 2019.
\newblock \doi{10.1007/978-3-319-91908-9_7}.

\end{thebibliography}
\end{document}